\newtheorem*{lemma}{Lemma}
\begin{document}
\title{Entanglement and nonclassicality: A mutual impression}
 \author{H. Gholipour}
  \email{Electronic address: hamedgholipoor1987@gmail.com}
 \affiliation{School of Physics, Iran University of Science and Technology, Narmak, Tehran 16846-13114, Iran}
 \author{F. Shahandeh}
 \email{Electronic address: f.shahandeh@uq.edu.au}
\affiliation{Centre for Quantum Computation and Communication Technology, School of Mathematics and Physics, University of Queensland, St Lucia, Queensland 4072, Australia}


\begin{abstract}
We find a sufficient condition to imprint the single-mode bosonic phase-space nonclassicality onto a bipartite state as modal entanglement and vice versa using an arbitrary beam splitter.
Surprisingly, the entanglement produced or detected in this way depends only on the nonclassicality of the marginal input or output states, regardless of their purity and separability.
In this way, our result provides a sufficient condition for generating entangled states of arbitrary high temperature and arbitrary large number of particles.
We also study the evolution of the entanglement within a lossy Mach-Zehnder interferometer and show that unless both modes are totally lost, the entanglement does not diminish.
\end{abstract}


\maketitle

\section{Introduction}
Quantum protocols outperform their classical counterparts by taking advantage of quantum resources, the most well-known of which is quantum entanglement~\cite{Einstein,Wooters,Plenio} used for quantum key distribution~\cite{Schrodinger}, quantum dense coding~\cite{Nielsen}, quantum teleportation~\cite{HorodeckiRMP}, and gravitational wave detection~\cite{LIGO}.
Consequently, it is of great importance to have easy approaches for entanglement generation and detection.

Arguably, the simplest way of generating entanglement is to use a beam splitter (BS).
Whether the input beams are two rays of photons~\cite{Kim,Wolf,Asboth,Jiang} or two rays of atoms~\cite{Pezze}, it is of fundamental interest to ask when does a BS generate entangled outputs.
It is also well-known that a BS transforms separable Fock states into modal entangled binomial states, while it leaves separable coherent states unentangled~\cite{Leonhardt}.
Such relations were first considered by Kim \textit{et al.}~\cite{Kim} and it was proven by Wang~\cite{Wang} that the ability of a BS to produce entangled outputs is closely connected to the nonclassical properties of the input states:
the entanglement in the output of a linear optical network implies the nonclassicality in the phase-space representation of the inputs.
This, indeed, revealed a necessary relation between the two notions of entanglement detection and nonclassicality, which was used to quantify the input nonclassicality via measuring the entanglement of the output~\cite{Asboth}.
In this direction, a nice study relating distillable entanglement detection to photon statistics nonclassicality is given in Ref.~\cite{Solomon}.
Further investigations led to a complete understanding of the entanglement-nonclassicality correspondence from the perspective of Gaussian states and distillable entanglement~\cite{Wolf,Brunelli}. 
Moreover, it has recently been shown that if the Glauber-Sudarshan $P$-functions of pure product inputs to a $N$-port connected linear optical network are nonclassical, then the output will almost always exhibit entanglement~\cite{Jiang}.
Nonetheless, a peculiar situation exists in which the entanglement--nonclassicality correspondence breaks down:
whenever two squeezed states with parallel squeezing axes interact on a balanced beam splitter, they will be transformed into separable product states although they are considered to be highly nonclassical.

To reconsider the mutual relation between entanglement generation (detection) and nonclassicality, we need an appropriate criterion for entanglement detection.
There are several criteria for verification of entanglement (see, e.g., Ref.~\cite{HorodeckiRMP} and references therein), probably the most famous one is the partial transposition criterion~\cite{Peres,Horodecki}. 
There is another extensively studied approach, the so-called entanglement witnessing, which provides a necessary and sufficient condition~\cite{Horodecki,Terhal,HorodeckiWIT,Lewen}. 
The advantage of entanglement witnesses (EWs) is that they allow us to detect entanglement without full information about the quantum state by measuring statistics of a finite number of quantum observables.
Different types of Bell inequalities, for instance, are EWs to test nonlocality~\cite{Terhal}.
Such observables are relatively easy to construct in low dimensions, while generally very hard in high dimensional Hilbert spaces, e.g. for continuous variable (CV) systems.
Despite that there are efficient methods for optimization of EWs in bipartite and even multipartite scenarios~\cite{Sperling1,Sperling2,Shahandeh}, there exist a few methods for EW construction; see, e.g., Ref.~\cite{Sixia}.
Here, however, we use an EW construction to be applied to our particular problem.

In this contribution, we derive a {\it sufficient} condition for the phase-space nonclassicality of at least one of the input states to a BS to guarantee the entanglement of the output modes, regardless of the other input state.
Remarkably, our criterion immediately removes any assumptions about the purity or the separability of the inputs.
Such a simple entanglement generation criterion allows one to easily design schemes for technically difficult tasks such as high particle number entanglement generation.
For example, we prove that a single-photon state of sufficient purity generates entanglement from any state, e.g. an arbitrarily high temperature thermal state, upon incidence on any BS.
We also re-examine entanglement detection stage and show that entanglement can be imprinted onto a single-mode in the form of phase-space negativities.
In other words, detecting negativities of marginal phase-space functions, e.g. using homodyne detection, at the output of a BS provides a simple criterion for local entanglement verification.

The paper is organized as follows.
We start with introducing our EW construction approach in Sec.~\ref{SecII} together with a detailed account of its properties.
In Sec.~\ref{SecIII} we derive our sufficient criterion for transforming nonclassicality to entanglement and vice versa using an arbitrary BS.
The extensive discussion of the results and their implications is given in Sec.~\ref{SecIV}.
To study the effect of loss on the generation and detection of entanglement, we study the explicit example of a Mach-Zehnder interferometer (MZI) with lossy arms in Sec.~\ref{SecV}.
A summary and conclusions are provided in Sec.~\ref{SecVI}.


\section{EW construction}~\label{SecII}

Let us start with introducing our EW construction procedure.
Consider a $n$-dimensional Hilbert space $\mathcal{H}$ 
and the linear vector space $\mathfrak{A}(\mathcal{H})$ of linear operators acting on $\mathcal{H}$, endowed with Hilbert-Schmidt inner product $(\hat{A},\hat{B}):=\text{Tr}\hat{A}^{\dag}\hat{B}$.
Denote the subset of $\mathfrak{A}(\mathcal{H})$ consisting of linear operators with finite norm as $\mathfrak{L}(\mathcal{H})$ and
the space of quantum states $\mathfrak{S}(\mathcal{H})$ as the subset of $\mathfrak{L}(\mathcal{H})$ for which all members are positive and normalized.
Thus, we have the hierarchy of spaces as $\mathfrak{S}(\mathcal{H})\subset\mathfrak{L}(\mathcal{H})\subset\mathfrak{A}(\mathcal{H})$.

Any linear functional from $\mathfrak{L}(\mathcal{H})$ to $\mathbb{C}$ 
is given by ${\rm Tr}\hat{F}\hat{A}$ for some $\hat{F}\in\mathfrak{L}(\mathcal{H})$ for all $\hat{A}\in\mathfrak{L}(\mathcal{H})$ via the celebrated Riesz theorem.
Suppose that the set $\mathfrak{X}=\{\hat{X}_{ij}\}\subset\mathfrak{A}(\mathcal{H})$ ($i,j=1,2,\dots,n$) is any basis set for $\mathfrak{L}(\mathcal{H})$ such that $\hat{X}_{ij}^{\dag}:=\hat{X}_{ji}$.
We call the set of operators $\mathfrak{X}'=\{\hat{X}_{ij}^{\prime}\}$ dual to $\mathfrak{X}$ if
\begin{equation}
\label{orthXdX}
 (\hat{X}_{ij}^{\prime},\hat{X}_{kl})=\text{Tr}\hat{X}_{ij}^{\prime\dag}\hat{X}_{kl}=\delta_{ik}\delta_{jl}.
\end{equation}
This simple construction allows us to expand any quantum state in either of the bases as
\begin{equation}
\label{rhoExp}
\begin{split}
 & \hat{\varrho}=\sum_{ij=1}^{n}\varrho_{ij}^\prime\hat{X}_{ij} \quad\text{with}\quad \varrho_{ij}^\prime=(\hat{X}_{ij}^{\prime},\hat{\varrho})=\text{Tr}\hat{X}_{ij}^{\prime\dag}\hat{\varrho},\\
 & \hat{\varrho}=\sum_{ij=1}^{n}\varrho_{ij}\hat{X}_{ij}^{\prime} \quad\text{with}\quad \varrho_{ij}=(\hat{X}_{ij},\hat{\varrho})=\text{Tr}\hat{X}_{ij}^{\dag}\hat{\varrho}.
\end{split}
\end{equation}
For two arbitrary quantum states $\hat{\varrho}_{1,2}{\in}\mathfrak{S}(\mathcal{H})$,
the Cauchy-Schwarz inequality reads as
\begin{equation}
0\leq\text{Tr}\hat{\varrho}_1\hat{\varrho}_2 \leq \sqrt{\text{Tr}\hat{\varrho}_1^2}\sqrt{\text{Tr}\hat{\varrho}_2^2}\leq 1,
\end{equation}
where we have used the facts that density operators are positive and $\text{Tr}\hat{\varrho}^2\leq 1$ for any quantum state.
Now, expanding $\hat{\varrho}_1$ in $\{\hat{X}_{ij}^{\prime}\}$ basis and $\hat{\varrho}_2$ in $\{\hat{X}_{ij}\}$ basis  using Eqs.~\eqref{orthXdX} and~\eqref{rhoExp} gives
\begin{equation}
\label{SimpCS}
0\leq\text{Tr}\hat{\varrho}_1\hat{\varrho}_2=\sum_{ij=1}^{n} \varrho_{1;ij}^{\ast}\varrho_{2;ij}^{\prime}\leq1,
\end{equation}
where $\varrho_{1;ij}^{\ast}:=(\hat{X}_{ij},\hat{\varrho}_1)^{\ast}=\text{Tr}\hat{X}_{ij}\hat{\varrho}_1$ and $\varrho_{2;ij}^{\prime}:=(\hat{X}^\prime_{ij},\hat{\varrho}_2)=\text{Tr}\hat{X}^{\prime\dag}_{ij}\hat{\varrho}_2$.
In addition, suppose that $0<\mathsf{G}\leq \tilde g$ is a bounded positive definite superoperator defined as
\begin{equation}
\mathsf{G}(\cdot)=\sum_{ij=1}^n g_{ij} \hat{X}_{ij}{\rm Tr}\hat{X}^{\prime\dag}_{ij}(\cdot),
\end{equation}
such that $\mathsf{G}(\hat{X}_{kl})=g_{kl}\hat{X}_{kl}$.
The matrix elements of $\mathsf{G}$ are thus given by $\mathsf{G}_{ij;kl}=\text{Tr}\hat{X}^{\prime\dag}_{ij}\mathsf{G}(\hat{X}_{kl})=g_{ij}\delta_{ik}\delta_{jl}$.
It is clear that the modulus of $g_{ij}$ is bounded by $\tilde g$.
Multiplying each term of the sum in Eq.~\eqref{SimpCS} with $g_{ij}$ preserves the inequality with the upper bound $\tilde g$, and thus,
 \begin{equation}
 \label{CSinequal}
  0\leq\sum_{ij=1}^{n} g_{ij}\varrho_{1;ij}^{\ast}\varrho_{2;ij}^{\prime} \leq \tilde g.
 \end{equation}
Consequently, we state the following EW construction procedure from arbitrary set of local operators. 
\begin{enumerate}[(i)]
\item Choose an arbitrary basis set $\mathfrak{X}=\{\hat{X}_{ij}\}$ ($i,j=1,2,\dots,n$) for $\mathfrak{L}(\mathcal{H})$ corresponding to the set of dual operators $\mathfrak{X}'=\{\hat{X}_{ij}^{\prime}\}$.
\item Choose a bounded positive definite superoperator (pseudo-metric) $0<\mathsf{G}\leq \tilde g$ with matrix elements $\text{Tr}\hat{X}^{\prime\dag}_{ij}\mathsf{G}(\hat{X}_{kl})=g_{ij}\delta_{ik}\delta_{jl}$.
\item Define the Hermitian witness operator to be
 \begin{equation}
 \label{Test}
  \hat{W}:=\sum_{ij=1}^{n} g_{ij}\hat{X}_{ij} \otimes \hat{X}_{ij}^{\prime\dag}.
 \end{equation} 
\end{enumerate}
One can simply verify that a bipartite quantum system living in the $n^4$-dimensional state space $\hat{\varrho}\in\mathfrak{S}(\mathcal{H}^{\otimes 2})$ is entangled if
 \begin{equation}
 \label{Wineq}
  (\hat{\varrho},\hat{W})=\text{Tr}\hat{\varrho}\hat{W}\notin [0,\tilde g].
 \end{equation}
This is because, according to the above discussion, for any separable state of the form $\hat{\sigma}=\sum_{k}p_k\hat{\sigma}_{1;k}\otimes\hat{\sigma}_{2;k}$ with $\sum_{k}p_k{=}1$ one has
$0\leq\text{Tr}\hat{\sigma}\hat{W} \leq \tilde g$.

Let us investigate the properties of the above class of witnesses under local maps.
For this purpose, suppose that the map $\hat{I}_i\otimes\tilde{\Lambda}_j$ ($i,j=1,2$ and $i\neq j$), where $\hat{I}_i$ is the identity of $\mathfrak{A}(\mathcal{H}_i)$ and $\Lambda_j:\mathfrak{A}(\mathcal{H}_j)\rightarrow\mathfrak{A}(\mathcal{H}_j)$, has been applied to the set of separable states so that $\hat{\tau}=\text{Tr}\hat{I}_i\otimes\tilde{\Lambda}_j(\hat{\sigma})$ is an unnormalized separable operator for any separable state $\hat{\sigma}$.
Thus, $0 \leq \text{Tr}\hat{\tau}\hat{W}$ if and only if the map $\tilde{\Lambda}_j$ is a positive map sending quantum states to (unnormalized) positive operators.
Now, corresponding to any map $\tilde{\Lambda}$ the adjoint map $\Lambda$ is defined such that $(\hat{B},\tilde{\Lambda}(\hat{A}))=(\Lambda(\hat{B}),\hat{A})$ for all $\hat{A},\hat{B}\in \mathfrak{L}(\mathcal{H})$.
Therefore, positivity of $\tilde{\Lambda}$ implies the positivity of the adjoint map and vice versa.
This, in turn, implies that $0 \leq \text{Tr}\hat{\sigma}\hat{I}_i\otimes\Lambda_j(\hat{W})$: 
any positive map preserves the witnessing property of $\hat{W}$ with respect to the lower bounded inequality.

What can we say about the upper bound $\tilde{g}$?
Given that $\hat{I}_i\otimes\tilde{\Lambda}_j$ ($i,j=1,2$ and $i\neq j$) is a positive map, we have
\begin{equation}
\frac{1}{N}\text{Tr}\hat{\tau}\hat{W}\leq \tilde{g},
\end{equation}
where $N=\text{Tr}\hat{\tau}$ and $\frac{1}{N}\hat{\tau}$ is a legitimate separable quantum state.
Consequently, $\text{Tr}\hat{\tau}\hat{W}\leq \tilde{g}$ if and only if $\text{Tr}\hat{I}_i\otimes\tilde{\Lambda}_j(\hat{\sigma}) \leq 1$:
we are allowed to apply local maps which are trace-non-increasing positive (TnIP) to the set of separable states.
Passing to the positive adjoint map $\Lambda_j$ we have $\text{Tr}\hat{\sigma}\hat{I}_i\otimes\Lambda_j(\hat{W})\leq \tilde{g}$.
In addition, $\tilde{\Lambda}_j$ must preserve the Hermiticity which imposes the Kraus representation $\tilde{\Lambda}_j(\hat{A})=\sum_k \epsilon_{jk} \hat{E}_{jk} \hat{A}\hat{E}^\dag_{jk}$ with $\epsilon_{jk}=\pm 1$ so that the adjoint map will be given by $\Lambda_j(\hat{B})=\sum_k \epsilon_{jk} \hat{E}^\dag_{jk} \hat{B}\hat{E}_{jk}$~\cite{Min}.
Thus, the trace-non-increasing property gives
\begin{equation}
\begin{split}
&\text{Tr}\tilde{\Lambda}_j(\hat{A}) = \text{Tr}\sum_k \epsilon_{jk} \hat{E}^\dag_{jk}\hat{E}_{jk} \hat{A} \leq 1 \text{~for all~}\hat{A}\\
& \Leftrightarrow \sum_k \epsilon_{jk} \hat{E}^\dag_{jk}\hat{E}_{jk} \leq \hat{\mathbb{I}}
\Leftrightarrow \text{Tr} \Lambda_j(\hat{\mathbb{I}}) \leq \hat{\mathbb{I}},
\end{split}
\end{equation}
that is the adjoint map must be sub-unital.
In summary, we are allowed to apply local maps which are sub-unital positive (SUP) to any witness of the form~\eqref{Test} preserving the upper bound in Eq.~\eqref{Wineq}.

It is also worth noticing that, for both cases discussed above, the map $\Lambda_j$ should necessarily be entanglement-non-breaking (EnB) to preserve the witnessing capability of $\hat{W}$.
In particular, the partial transposition operation is a trace-non-increasing positive but not completely positive (and thus SUP) map which is also EnB, preserving the witnessing property of the operators in Eq~\eqref{Test}.

In the following, we show another important property of our construction; it is basis set independent and it is uniquely determined by the pseudo-metric $\mathsf{G}$.
Consider an Invertible map $\Lambda:\mathfrak{A}(\mathcal{H})\rightarrow\mathfrak{A}(\mathcal{H})$ and its dual $\Lambda':\mathfrak{A}(\mathcal{H})\rightarrow\mathfrak{A}(\mathcal{H})$ which is defined so that it preserves the duality condition~\eqref{orthXdX}:
\begin{equation}
\text{Tr}\Lambda^{\prime\dag}(X_{ij}^{\prime})\Lambda(X_{kl})=\delta_{ik}\delta_{jl}~\text{and}~\Lambda^{\prime\dag}(X_{ij}^{\prime})=\Lambda^{\prime}(X_{ij}^{\prime\dag}).
\end{equation}
Hence, it can be easily verified that 
\begin{equation}
(\hat{\sigma},\hat{W})=\text{Tr}\hat{\sigma}\hat{W}=\text{Tr}\hat{\sigma}\Lambda\otimes\Lambda^{\prime}(\hat{W})=\text{Tr}\hat{\varrho}_1\mathsf{G}(\hat{\varrho}_2).
\end{equation}
Thus, the witness operator in Eq.~\eqref{Test} is $\Lambda\otimes\Lambda^{\prime}$ invariant for any invertible map $\Lambda$.
It is also clear that all basis sets for $\mathfrak{L}(\mathcal{H})$ can be transformed to each other via an invertible map $\Lambda$---they are all isomorphic to each other.
In other words, by choosing a fixed metric $\mathsf{G}$, the outcome of the witnessing procedure is independent of the chosen bases.
The above considerations imply that the Hermitian operator $\hat{W}$ in Eq.~\eqref{Test} can be a bipartite entanglement witness for any choice of the basis set $\mathfrak{X}$.
Accordingly, we may consider a typical local orthogonal rank-one (self-dual) basis set $\mathfrak{X}=\{|i\rangle\langle j|\}_{i,j=1}^n$ and construct the EW $\hat{W}=\sum_{ij=1}^n|i\rangle\langle j|\otimes|i\rangle\langle j|$ where the metric is chosen to be the identity superoperator.
Any other basis set $\mathfrak{X}$ and its dual set $\mathfrak{X}'$ can be obtained from $\{|i\rangle\langle j|\}_{i,j=1}^n$ using two dual invertible maps, $\Lambda$ and $\Lambda^\prime$. 
Moreover, we may note that $\hat{W}$ is the Choi matrix of the identity map.
Notably, all arguments of this section equally hold for infinite dimensional Hilbert spaces,
because both the Riesz theorem and Cauchy-Schwarz inequality hold for any Banach space.

A relevant subtle point here is that, in the infinite or very high dimensional cases, one is not able to measure an infinite number of basis elements in practice.
We close this section by proving that our construction method can be equally applied to a subset of bases elements.

\begin{lemma}
Any operator of the form~\eqref{Test}, constructed from any subset of the basis set, $\{\hat{Y}_{ij}\}\subseteq\{\hat{X}_{ij}\}$, satisfies the necessary condition of the entanglement witness Eq.~\eqref{Wineq} provided that it spans a subset of the state space $\mathfrak{M}(\mathcal{H})\subseteq\mathfrak{S}(\mathcal{H})$.
\end{lemma}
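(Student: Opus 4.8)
The plan is to show that, on the restricted state space, the truncated operator acts identically to the full witness $\hat{W}$ of Eq.~\eqref{Test}, so that the bound $0\leq(\hat{\sigma},\hat{W})\leq\tilde g$ already established for separable states is simply inherited. First I would make the hypothesis precise. Let $S$ denote the index set of the retained elements, so that the operator in question is $\hat{W}_Y=\sum_{(ij)\in S}g_{ij}\hat{X}_{ij}\otimes\hat{X}_{ij}^{\prime\dag}$, and let $\mathfrak{M}(\mathcal{H})$ be the set of quantum states lying in the linear span $V=\mathrm{span}\{\hat{Y}_{ij}\}$. The bipartite objects to be tested are the separable states $\hat{\sigma}=\sum_k p_k\,\hat{\sigma}_{1;k}\otimes\hat{\sigma}_{2;k}$ whose marginals $\hat{\sigma}_{1;k},\hat{\sigma}_{2;k}$ all belong to $\mathfrak{M}(\mathcal{H})$.

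The key step is to evaluate a single product term. Using $\text{Tr}[(\hat{\sigma}_{1}\otimes\hat{\sigma}_{2})(\hat{X}_{ij}\otimes\hat{X}_{ij}^{\prime\dag})]=\sigma_{1;ij}^{\ast}\sigma_{2;ij}^{\prime}$ exactly as in Eq.~\eqref{SimpCS}, I would observe that whenever $\hat{\sigma}_{2}\in V$ its primal components obey $\sigma_{2;ij}^{\prime}=\text{Tr}\hat{X}_{ij}^{\prime\dag}\hat{\sigma}_{2}=0$ for every $(ij)\notin S$. This is an immediate consequence of the biorthogonality relation~\eqref{orthXdX}: expanding $\hat{\sigma}_{2}=\sum_{(kl)\in S}c_{kl}\hat{X}_{kl}$ gives $\sigma_{2;ij}^{\prime}=\sum_{(kl)\in S}c_{kl}\delta_{ik}\delta_{jl}$, which vanishes outside $S$. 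Consequently every term discarded in passing from $\hat{W}$ to $\hat{W}_Y$ contributes zero, and
\begin{equation}
\text{Tr}\hat{\sigma}\hat{W}_Y=\sum_k p_k\sum_{(ij)\in S}g_{ij}\sigma_{1;ij}^{\ast}\sigma_{2;ij}^{\prime}=\text{Tr}\hat{\sigma}\hat{W}
\end{equation}
for every separable $\hat{\sigma}$ supported on $\mathfrak{M}(\mathcal{H})$.

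With this identity in hand the conclusion is immediate: since $\hat{W}$ satisfies $0\leq\text{Tr}\hat{\sigma}\hat{W}\leq\tilde g$ for all separable states, as established through Eqs.~\eqref{SimpCS}--\eqref{Wineq}, the same bounds hold for $\text{Tr}\hat{\sigma}\hat{W}_Y$ on $\mathfrak{M}(\mathcal{H})$, which is precisely the necessary condition~\eqref{Wineq}. I expect the only genuinely delicate point to be the definition of $\mathfrak{M}(\mathcal{H})$ and the role of the proviso: for an \emph{arbitrary} separable state the truncation is not exact, and discarding basis elements could move $\text{Tr}\hat{\sigma}\hat{W}_Y$ outside $[0,\tilde g]$, thereby flagging a separable state as entangled. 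The content of the lemma is exactly that confining the marginals to the span of the retained operators removes this danger, with the biorthogonality~\eqref{orthXdX} guaranteeing that the dropped dual components vanish. One may note that, strictly, it suffices for one marginal of each product term to lie in $V$; requiring both simply renders the restricted witness symmetric.
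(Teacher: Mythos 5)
Your core computation is sound and it is essentially the same mechanism the paper exploits: by the biorthogonality relation~\eqref{orthXdX}, the truncated operator $\hat{W}_Y=\sum_{(ij)\in S}g_{ij}\hat{X}_{ij}\otimes\hat{X}_{ij}^{\prime\dag}$ only couples to the components of the marginals along the retained basis elements. However, as written you prove a strictly weaker statement than the lemma. You establish $0\leq\mathrm{Tr}\,\hat{\sigma}\hat{W}_Y\leq\tilde{g}$ only for separable states whose marginals all lie in $V=\mathrm{span}\{\hat{Y}_{ij}\}$, and you explicitly concede that for an arbitrary separable state the truncation ``could move $\mathrm{Tr}\,\hat{\sigma}\hat{W}_Y$ outside $[0,\tilde{g}]$.'' That concession gives away the content of the lemma. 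The proviso in the statement is a condition on the \emph{choice of the subset} --- namely that $\{\hat{Y}_{ij}\}$ spans a legitimate state space $\mathfrak{M}(\mathcal{H})$ --- not a restriction on the states to which the witness is applied. The motivation given just before the lemma (one cannot measure infinitely many basis elements in practice) requires the truncated operator to satisfy the separability bound of Eq.~\eqref{Wineq} for \emph{arbitrary} states of the full Hilbert space; if one already had to know that the marginals live in $V$, the lemma would be of little use for that purpose.

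The missing step is the paper's decomposition argument. Under the proviso, every marginal splits as $\hat{\sigma}_i=\alpha_i\hat{\sigma}_i^{\rm m}+(1-\alpha_i)\hat{\sigma}_i^{\rm c}$ with $\alpha_i\in[0,1]$, $\hat{\sigma}_i^{\rm m}\in\mathfrak{M}(\mathcal{H})$ and $\hat{\sigma}_i^{\rm c}$ in the orthogonal complement state space $\mathfrak{M}^{\rm c}(\mathcal{H})$; the crucial point is that the projection of an arbitrary state onto $\mathfrak{M}(\mathcal{H})$ is again a legitimate (subnormalized) density operator. Biorthogonality kills the cross terms, leaving $\mathrm{Tr}\,\hat{\sigma}\hat{W}^{\rm m}=\alpha_1\alpha_2\,\mathrm{Tr}\,\hat{\sigma}_1^{\rm m}\hat{\sigma}_2^{\rm m}$, and since $\alpha_1\alpha_2\leq1$ while the Cauchy--Schwarz bound applies to the genuine states $\hat{\sigma}_1^{\rm m}$ and $\hat{\sigma}_2^{\rm m}$, the value stays in $[0,\tilde{g}]$ for \emph{every} separable state. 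Your identity is exactly the special case $\alpha_1=\alpha_2=1$. To repair your argument you must show that the $V$-components of arbitrary marginals --- which are what $\hat{W}_Y$ actually sees --- are themselves positive subnormalized operators; that is precisely where the hypothesis that $\{\hat{Y}_{ij}\}$ spans a state space enters, and it plays no role in your proof.
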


\begin{proof}
Without loss of generality, we assume a finite dimensional Hilbert space.
The set $\mathfrak{M}(\mathcal{H})$ is convex and endowed with a set of rank-one projection generators $\{\hat{P}^{\rm m}_i\}$ in a one to-to-one correspondence with $\{\hat{Y}_{ij}\}$, $\{\hat{P}^{\rm m}_i\}\cong\{\hat{Y}_{ij}\}$.
Let us call $\mathfrak{M}^{\rm c}(\mathcal{H})$ the state space generated by the span of the basis set $\{\hat{Z}_{ij}\}=\{\hat{X}_{ij}\}\backslash\{\hat{Y}_{ij}\}$.
In the same way, $\mathfrak{M}^{\rm c}(\mathcal{H})$ is the convex hull of rank-one projections $\{\hat{P}^{\rm c}_i\}\cong\{\hat{Z}_{ij}\}$.
The whole convex set $\mathfrak{S}(\mathcal{H})$ is generated by $\{\hat{X}_{ij}\}=\{\hat{Y}_{ij}\}\bigcup\{\hat{Z}_{ij}\}\cong\{\hat{P}^{\rm m}_i\}\bigcup\{\hat{P}^{\rm c}_i\}$.
It is also clear that $\mathfrak{M}^{\rm c}(\mathcal{H})\perp\mathfrak{M}(\mathcal{H})$.
Therefore, there exists a decomposition $\hat{\varrho}=\alpha\hat{\varrho}^{\rm m}+(1-\alpha)\hat{\varrho}^{\rm c}$ with $\alpha\in[0,1]$, $\hat{\varrho}^{\rm m}\in\mathfrak{M}(\mathcal{H})$ and $\hat{\varrho}^{\rm c}\in\mathfrak{M}^{\rm c}(\mathcal{H})$ for all $\hat{\varrho} \in \mathfrak{S}(\mathcal{H})$.
We conclude the proof with noticing that the projection of any state $\hat{\varrho} \in \mathfrak{S}(\mathcal{H})$ onto $\mathfrak{M}(\mathcal{H})$ is the legitimate density operator $\hat{\varrho}^{\rm m}$, and the proof for the conditions of witness operator~\eqref{Test} holds for it.
That is, given the witness $\hat{W}^{\rm m}=\sum_{ij}g_{ij}\hat{Y}_{ij}\otimes\hat{Y}^{\prime\dag}_{ij}$ and for any product state $\hat{\sigma}=\hat{\sigma}_1\otimes\hat{\sigma}_2$ where $\hat{\sigma}_i=\alpha_i\hat{\sigma}^{\rm m}_i+(1-\alpha_i)\hat{\sigma}^{\rm c}_i$ with $\alpha_i\in[0,1]$, $\hat{\sigma}^{\rm m}_i\in\mathfrak{M}(\mathcal{H})$ and $\hat{\sigma}^{\rm c}_i\in\mathfrak{M}^{\rm c}(\mathcal{H})$ for $i=1,2$, one has
\begin{equation}
0\leq(\hat{\sigma},\hat{W}^{\rm m})={\rm Tr}\hat{\sigma}\hat{W}^{\rm m}=\alpha_1\alpha_2{\rm Tr}\hat{\sigma}^{\rm m}_1\hat{\sigma}^{\rm m}_2 \leq \tilde{g},
\end{equation}
The generalization to $\hat{\sigma}=\sum_{k}p_k\hat{\sigma}_{1;k}\otimes\hat{\sigma}_{2;k}$ with $\sum_{k}p_k{=}1$ is straightforward.
\end{proof}

%

\section{The sufficient criterion}~\label{SecIII}

In what follows, as the main result of this paper, we construct an EW and apply it to the output of a BS to obtain a sufficient criterion for entanglement generation and detection.

A useful witness operator of type~\eqref{Test} can be constructed from $s$-ordered displacement operator $\hat{D}(\alpha,s)=\hat{D}(\alpha)\exp(s|\alpha|^2/2)$ with $\alpha\in\mathbb{C}$ and $s\in[-1,1]$. 
The parameter values $s=-1,0,1$ correspond to anti-normal, Weyl-Wigner and normal orderings, respectively~\cite{Glauber}.
The $s$-parametrized characteristic function of the state $\hat{\varrho}$ is given by $\chi(\alpha,s)=\text{Tr}\hat{\varrho}\hat{D}(\alpha,s)$.
We also have the duality relation 
\begin{equation}
\text{Tr}\hat{D}^{\dag}(\alpha,-s)\hat{D}(\beta,s)=\pi\delta^{(2)}(\beta-\alpha),
\end{equation}
and thus the dual operator to $\hat{D}(\alpha,s)$ is $\hat{D}^{\dag}(\alpha,-s)=\hat{D}(-\alpha,-s)$.
Therefore, we may replace the sum in Eq.~\eqref{Test} by integration over the whole complex plane and construct the following $s$-parameterized witness operator,
\begin{equation}
\label{WDisp}
 \hat{W}(s)=\int \frac{d^2\alpha}{\pi} \hat{D}_1(\alpha,s)\otimes\hat{D}_2(-\alpha,-s).
\end{equation}
Here, the corresponding pseudo-metric $\mathsf{G}$ is chosen to be the identity, and $\hat{D}_1(\alpha,s)$ and $\hat{D}_2(\alpha,s)$ act on $\mathcal{H}_1$ and $\mathcal{H}_2$, respectively.

To find a sufficient criterion for entanglement generation by a BS, we apply a specially modified version of the witness~\eqref{WDisp} to the output of a BS.
Then, using a retrodictive calculation we investigate the violation of the witness inequality in terms of the input sates; cf. Fig.~\ref{MZint} (a).
\begin{figure*}
\includegraphics[width=12cm]{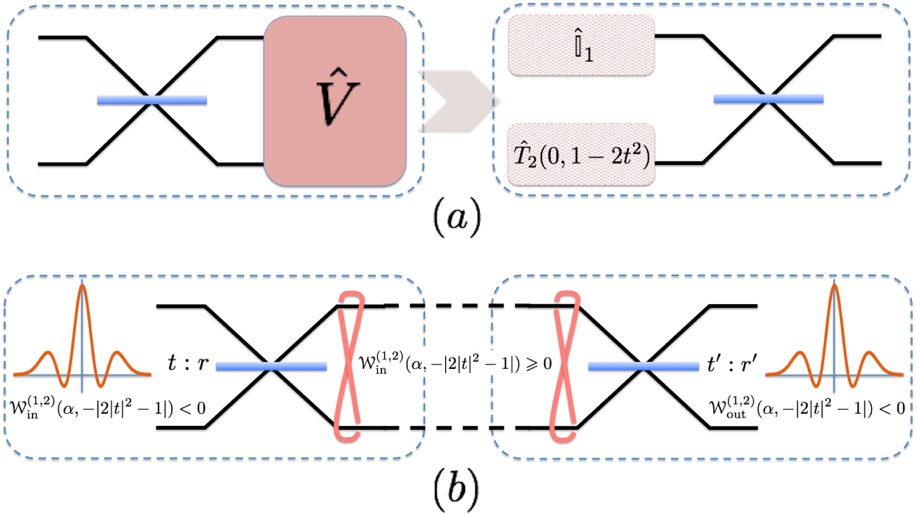}
  \caption{(a) Application of the EW of Eq.~\eqref{VDisp} to the output of a BS.
  The witness operator $\hat{V}$ is designed in such a way that it results in the identity operator for mode one after a retrodictive calculation of the BS operation.
  (b) The local detection of entanglement by detection of marginal nonclassicality.
  Note that, after the first $(t:r)$-BS the marginal QPDs of the intermediate state are positive for $s\leq-|2|t|^2-1|$.
  The second $(t':r')$-BS is chosen such that the maximum negativity is retrieved for one of the marginal output QPDs.
  }\label{MZint}
\end{figure*} 
First, we consider the witness operator~\eqref{WDisp} for $s=1$.
Next, we apply the positive map
\begin{equation}\label{LambdaMap}
(\mathsf{T} \circ \Lambda)(\hat{D}(\alpha,-1)):= \frac{t^2}{r^2}\hat{D}(-\frac{t}{r}\alpha,-1)~\text{with}~\frac{t}{r}\geq 1
\end{equation}
to the second mode of $\hat{W}(1)$ where $\mathsf{T}$ is the transposition map acting as $\alpha\rightarrow -\alpha$ and, without loss of generality, we assume that $t,r\in\mathbb{R}$ and $t^2+r^2=1$; c.f. Appendix~\ref{TnIPApp} for the proof of positivity of the map~\eqref{LambdaMap}.
Notice that, as we will be testing the lower bounded witnessing inequality, positivity of the map $\Lambda$ is sufficient for the upcoming analysis.
This gives the modified witness
\begin{equation}
\label{VDisp}
 \hat{V}=\frac{t^2}{r^2}\int \frac{d^2\alpha}{\pi} \hat{D}_1(\alpha,1)\otimes\hat{D}_2(\frac{t}{r}\alpha,-1).
\end{equation}
Finally, given a BS of transitivity $t$ and reflectivity $r$ -- in short a ($t:r$)-BS, we apply the witness $\hat{V}$ to its output.
One finds the expectation value of the witness operator for any input state to be
\begin{align}\label{rhoOut2}
\begin{split}
	\text{Tr}\hat{\varrho}_{\rm out}\hat{V}&=\text{Tr}\hat{U}_{t:r}\hat{\varrho}_{\rm in}\hat{U}^{\dag}_{t:r}\hat{V}\\
	&=t^2 \text{Tr}\hat{\varrho}_{\rm in}\hat{\mathbb{I}}_1\otimes\hat{T}_2(0,1-2t^2)\\
	&=\pi t^2 \mathscr{W}_{\rm in}^{(2)}(0,1-2t^2),
\end{split}
\end{align}
where we have used the beam splitter transformation $\hat{U}_{t:r}^{\dag}(\hat{a},~\hat{b})\hat{U}_{t:r}=(t\hat{a}+r\hat{b},~-r\hat{a}+t\hat{b})$ together with the definition
\begin{equation*}
\hat{T}(\beta,s)=\int \frac{d^2 \alpha}{\pi} e^{\beta\alpha^\ast - \beta^\ast\alpha} \hat{D}(\alpha,s)
\end{equation*}
for the $s$-parametrized Wigner operator such that ${\rm Tr}\hat{\varrho}\hat{T}(\beta,s)=\pi\mathscr{W}_{\hat{\varrho}}(\beta,s)$ is the $s$-parametrized quasiprobability distribution (QPD) of the state $\hat{\varrho}$~\cite{Glauber}.
As a result, $\mathscr{W}_{\rm in}^{(2)}(0,1-2t^2)<0$ if and only if $\text{Tr}\hat{\varrho}_{\rm out}\hat{V}<0$.
Moreover, if $\mathscr{W}_{\rm in}^{(2)}(\beta,1-2t^2)<0$ at some phase-space point $\beta$, we may shift the witness operator using an appropriate local displacement operation to the point $\beta$, $\hat{V}(\beta)$, and obtain a negative value for $\text{Tr}\hat{\varrho}_{\rm out}\hat{V}(\beta)$ without changing the entanglement content of the state.
To obtain the similar condition on the marginal QPDs of the first input, we only need to apply the map $\Lambda$ of Eq.~\eqref{LambdaMap} (without the transposition $\mathsf{T}$) to the first mode of $\hat{W}(-1)$ to get the modified witness $\hat{V}'$.
A retrodictive evaluation at the input thus gives
\begin{align*}
	\text{Tr}\hat{\varrho}_{\rm out}\hat{V}'=\pi t^2 \mathscr{W}_{\rm in}^{(1)}(0,1-2t^2),
\end{align*}
implying that $\mathscr{W}_{\rm in}^{(1)}(0,1-2t^2)<0$ if and only if $\text{Tr}\hat{\varrho}_{\rm out}\hat{V}'<0$ which can be further generalized to any phase-space point $\beta$.

For the case of $t/r<1$, although the map $\Lambda$ of Eq.~\eqref{LambdaMap} is no longer positive, we can alternatively apply the legitimate map
\begin{equation}\label{Lambda2Map}
\Xi(\hat{D}(\alpha,-1)):= \frac{r^2}{t^2}\hat{D}(\frac{r}{t}\alpha,-1)~\text{with}~\frac{r}{t}> 1
\end{equation}
to the second mode of $\hat{W}(1)$ to get a third witness $\hat{V}''$.
This gives the counterpart of Eq.~\eqref{rhoOut2} as
\begin{align}\label{rhoOut1}
	\text{Tr}\hat{\varrho}_{\rm out}\hat{V}''=\pi r^2 \mathscr{W}_{\rm in}^{(1)}(0,1-2r^2),
\end{align}
and thus, negativity of $(1-2r^2)$-parametrized marginal of the first input implies the entanglement of the output.
To obtain a similar condition on the second input mode in this case, one should apply $(\mathsf{T}\circ\Xi)(\hat{D}(\alpha,-1))$ to the first mode of $\hat{W}(-1)$ to get the appropriate witness.

Taking into account the possibility of a complex valued transitivity, we can combine all the conditions above into one negativity condition: $\mathscr{W}^{(1,2)}_{\rm in}(\alpha,-|2|t|^2-1|)<0$.
In short, negativities of at least one of the marginal $s$-parametrized QPDs for $s=-|2|t|^2-1|$ at the input (output) to a $(t:r)$-BS implies the entanglement of the output (input).

\section{Discussion of the results}\label{SecIV}

There are several interesting implications by the results in previous section which we will discuss one by one.
The first important thing to notice is that, our condition is a sufficient one.
That is, for instance, a negative marginal Wigner function at the input to a balanced BS guarantees the entanglement of the output modes, nevertheless, it is not necessary for inputs to possess nonclassical Wigner functions to produce entangled outputs.
The second important message is that, concerning the properties of the input state, one can see that there is no need to inject pure, product, or separable states into the BS.
The input can even be entangled, and if one of the marginal QPDs satisfies the above condition the entanglement will be preserved by the BS.
More interestingly, it is sufficient that the criterion is satisfied by only one of the input modes, regardless of what quantum state is being injected into the other port of the BS.
The conclusion from this fact is that a pure single photon will generates entanglement upon mixing with any quantum state on any BS, because except its $Q$-function, corresponding to a fully transitive or reflective BS, all of its QPDs possess negativities.
In particular, in an optical arrangement, high photon number entangled states are of special interest, because their large number of photons make them efficient for interactions.
A way for obtaining such states is by mixing a single photon with a coherent state on a BS~\cite{Sekatski}.
Surprisingly, using our criterion, it turns out that the input state to the other port of the BS can be any state, e.g. a bright thermal state of arbitrarily high temperature.
In this scenario, the high number of photons from the thermal state split into two classically distinguishable rays while getting entangled due to the negativity of the injected single photon.
This is just like nonclassicality being imprinted onto the output modes as entanglement. 

Let us also discuss the inverse case, where entanglement of the input state imposes the marginal nonclassicality on the outputs.
The main problem in applying EWs to CV systems is the extreme difficulty of coincidence measurements due to the high number of dimensions.
Nevertheless, there are entangled states for which one does not need coincidence measurements to verify their entanglement when the verification is done locally.
It suffices to interact the two rays on an arbitrary BS, perform homodyne measurements separately on both outputs and check the negativity of the reduced QPDs which is much easier; cf. Fig.~\ref{MZint} (b).
That is, we may reverse the generation process and verify the entanglement of the input.
In practice, there is even no need for a full state reconstruction to verify nonclassicality of QPDs, as they can also be detected using appropriate witnesses much more simply than entanglement~\cite{nonCW}.
Operationally speaking, this technique contains no measurement of correlated events between the two modes.
Another lesson we learn from this fact is that no matter how much nonclassicality has been separably injected into the BS, the output will always exhibit classical $s$-parametrized marginal QPDs for $s\leq-|2|t|^2-1|$.
In this way, negativities of the QPDs can be considered as information carriers about the entanglement of the source.

\section{The evolution of the entanglement}~\label{SecV}

As a final word, we consider the evolution and verification of the entanglement generated via satisfying our nonclassicality criterion when the state traverses through a lossy Mach-Zehnder interferometer (MZI) (see e.g. Fig.~\ref{MZint}~(b)).
Suppose that we inject a state with nonclassical Wigner function into the first port of a balanced BS to generate entanglement and expect to receive it at the first port of the output ($t:r$)-BS.
It is known that a lossy channel will transform anti-normally ordered displacement operators as $\Lambda_{\eta}(\hat{D}(\alpha,-1))=\hat{D}(\alpha/\eta,-1)/\eta^2$ where $0\leq\eta\leq 1$ quantifies the loss~\footnote{$\eta=0$ represents a completely lossy channel while $\eta=1$ is used for a perfectly lossless channel.}.
At the other end of the MZI, we are interested in parameters for the output BS such that most of the effect from the state in the second input port is canceled.
We have shown in details in the Appendix~\ref{App} that the best choice of the output BS parameters, $t$ and $r$, are
\begin{equation}\label{tANDr}
 t=\frac{\eta_2}{\sqrt{\eta_1^2+\eta_2^2}}\quad\text{and}\quad r=\frac{\eta_1}{\sqrt{\eta_1^2+\eta_2^2}},
\end{equation}
where $\eta_{1}$ and $\eta_{2}$ are the losses of the first and second arms, respectively.
Consequently, the detected marginal output is
\begin{equation}\label{GDeg}
\begin{split}
\hat{\varrho}_{1;{\rm out}}={\rm Tr}_{2}\hat{\varrho}_{\rm out} =&\int \frac{d^2\alpha}{\pi} \chi^{(1)}_{\rm in}(-\frac{\sqrt{2}\eta_1\eta_2}{\sqrt{\eta^2_1+\eta^2_2}}\alpha)\\
&\times \exp\{-\frac{(\eta_1-\eta_2)^2}{2(\eta_1^2+\eta_2^2)}|\alpha|^2\} \hat{D}_1(\alpha),
\end{split}
\end{equation}
where $\chi^{(1)}_{\rm in}(\alpha)={\rm Tr}\hat{\varrho}_{\rm in}\hat{D}_1(\alpha)$.
Equation~\eqref{GDeg} clearly represents a Gaussian degrading of the output to a QPD of $s=-(\eta_1-\eta_2)^2/(\eta_1^2+\eta_2^2)$.
The very interesting situation would then occur if the loss is symmetric in both arms ($\eta_1=\eta_2=\eta$)
for which using Eq.~\eqref{tANDr} the output BS should be balanced, there will be no Gaussian smoothing effect, and we get
\begin{equation}
\begin{split}
	\mathscr{W}^{(1)}_{\rm out}(\alpha)=\frac{1}{\eta^2}\mathscr{W}^{(1)}_{\rm in}(\frac{\alpha}{\eta}).
\end{split}
\end{equation}
This represents the regime in which the negativity degrades quadratically with loss as $\eta$ goes to zero (total loss in both arms).
In fact, as long as there is not $100\%$ loss in both channels, any negativity at the input will ultimately survive and can be detected at the output, although it can be very small.
The effect shown here is a generalization of the similar phenomena for single photon and Gaussian entanglement where very large losses in each mode makes the entanglement vanishingly small, but do not destroy it until the mode is completely blocked.
The example of a lossy single-photon entangled state is given in Appendix~\ref{SPApp}.

\section{Summary and conclusions}~\label{SecVI}

In conclusion, we found a sufficient nonclassicality condition for an arbitrary beam splitter (BS) to generate entanglement.
We also found a sufficient nonclassicality condition for the marginal outputs of a BS to ensure entanglement of the input state.
To achieve this goal, first, we introduced a proper entanglement witness (EW) construction method from arbitrary local bases sets of operators.
We have extensively discussed the properties of our construction scheme under bases transformations and used them to improve EWs.
Next, we defined a continuous variable EW capable of detecting the negativities of the marginal quasiproability distributions input to an arbitrary BS.
We have proven that negativity of at least one of the $s$-parametrized marginal quasiprobabilities for $s=-|2|t|^2-1|$ at the input (output) of a BS with transitivity $t$ implies the entanglement of the output (input) state.
In particular, we showed that a single photon of sufficient purity can transform \emph{any} input state into a bipartite entangled state upon interaction on an arbitrary BS.
We also showed that this entanglement can be imprinted on the marginal output quasiproabilities of a second BS in the form of negativities to certify the entanglement of the input modes.
In other words, our sufficient criterion provides an easy way for the generation and detection of entangled states with arbitrary high number of photons.
Using general arguments, we studied the evolution of the entanglement generated and detected by our criterion in a lossy Mach-Zehnder interferometer and extended the important fact that the entanglement does not diminish as long as both modes are not completely blocked (lost).
Last but not least, our results hold true for other bosonic systems such as quantum opto-mechanics and spin ensembles as well.

\section*{Acknowledgements}
The authors gratefully acknowledge valuable discussions by S. Rahimi-Keshari, J. Sperling, M. R. Vanner, and T. C. Ralph.
FS was supported through the Australian Research Council Discovery Project (No. DP140101638) and the Australian Research Council Centre of Excellence of Quantum Computation and Communication Technology (Project No. CE110001027)

\appendix
\begin{widetext}

\section{The positive map $\bf\Lambda$}\label{TnIPApp}
To show that the map in Eq.~\eqref{LambdaMap} is positive, let us consider the general case of a map
\begin{equation}
\Lambda(\hat{D}(\alpha,-1)):=\kappa^{2} \hat{D}(\kappa\alpha,-1).
\end{equation}
It is known that~\cite{Glauber}
\begin{equation}\label{FtransD}
\hat{T}(\beta,-1)=\int \frac{d^2 \alpha}{\pi} e^{\beta\alpha^\ast - \beta^\ast\alpha} \hat{D}(\alpha,-1) = |\beta\rangle\langle\beta|,
\end{equation}
where $|\beta\rangle$ is the coherent state.
Applying the map $\Lambda$ to Eq.~\eqref{FtransD}, we get
\begin{equation}
\begin{split}
\Lambda(|\beta\rangle\langle\beta|) = \Lambda(\hat{T}(\beta,-1)) & = \kappa^2 \int \frac{d^2 \alpha}{\pi} e^{\beta\alpha^\ast - \beta^\ast\alpha} \hat{D}(\kappa\alpha,-1)\\
& = \int \frac{d^2 \gamma}{\pi} e^{(\frac{\beta}{\kappa})\gamma^\ast - (\frac{\beta}{\kappa})^\ast\gamma} \hat{D}(\gamma,-1)\\
&=|\frac{\beta}{\kappa}\rangle\langle\frac{\beta}{\kappa}|.
\end{split}
\end{equation} 
Clearly, the map $\Lambda$ acts as the attenuation channel and it is physically legitimate and thus positive if and only if $\kappa \geq 1$.
It is also important to note that the (partial) transposition operation $\mathsf{T}$ in Eq.~\eqref{LambdaMap} is also positive, and that the composition of two positive maps is also positive.

\section{State evolution in a lossy MZI}\label{App}
Suppose that we are injecting two separable states into our balanced beam splitter (BS) to generate entanglement.
Thus, from our theorem, we need a negative Wigner function at least in one of the input ports.
Assume that the negative input is injected into port one. 
Then, the state exiting the input BS will be
\begin{align}
	\hat{\varrho}_{\rm int}=\int \frac{d^2\alpha d^2\beta}{\pi^2} \chi^{(1)}_{\rm in}(-\alpha,1)\chi^{(2)}_{\rm in}(-\beta,1) \hat{D}_1(\frac{\alpha+\beta}{\sqrt{2}},-1)\otimes\hat{D}_2(\frac{-\alpha+\beta}{\sqrt{2}},-1),
\end{align}
in which $\hat{D}_i(\alpha,s)=\hat{D}_i(\alpha)e^{\frac{s}{2}|\alpha|^2}$ is the $s$-ordered displacement operator, and $\chi^{(i)}(\alpha,s)={\rm Tr}\hat{\varrho}\hat{D}_i(\alpha,s)$ is the $s$-parameterized characteristic function of the $i$th mode for $i=1,2$.
Using the effect of a lossy channel on the anti-normally ordered displacement operators as described in Appendix~\ref{TnIPApp},
\begin{equation}
	\Lambda_{\eta}(\hat{D}(\alpha,-1))=\frac{1}{\eta^2}\hat{D}(\frac{\alpha}{\eta},-1), \quad 0\leq\eta\leq 1,
\end{equation}
after losses in both channels we get
\begin{align}
	\hat{\varrho}_{\rm int}=\frac{1}{\eta_1^2\eta_2^2}\int \frac{d^2\alpha d^2\beta}{\pi^2} \chi^{(1)}_{\rm in}(-\alpha,1)\chi^{(2)}_{\rm in}(-\beta,1) \hat{D}_1(\frac{\alpha+\beta}{\sqrt{2}\eta_1},-1)\otimes\hat{D}_2(\frac{-\alpha+\beta}{\sqrt{2}\eta_2},-1).
\end{align} 

Now, we want to retrieve the negativity on the detection site.
We put the beams on a BS of transitivity $t$.
The state after the output BS is
\begin{equation}
\begin{split}
	\hat{\varrho}_{\rm out}=\frac{1}{\eta_1^2\eta_2^2}\int \frac{d^2\alpha d^2\beta}{\pi^2} \chi^{(1)}_{\rm in}(-\alpha,1)\chi^{(2)}_{\rm in}(-\beta,1) & \hat{D}_1\Big([\frac{t}{\eta_1}+\frac{r}{\eta_2}]\frac{\alpha}{\sqrt{2}}+[\frac{t}{\eta_1}-\frac{r}{\eta_2}]\frac{\beta}{\sqrt{2}},-1\Big)\\
	&\otimes \hat{D}_2\Big([\frac{r}{\eta_1}-\frac{t}{\eta_2}]\frac{\alpha}{\sqrt{2}}+[\frac{r}{\eta_1}+\frac{t}{\eta_2}]\frac{\beta}{\sqrt{2}},-1\Big).
\end{split}
\end{equation}
We are only interested in the reduced quaiprobabilities of the output modes.
Also, by our choice of phases of the beam splitters, we know that the negativity should appear in the output mode one.
So, we trace out the second mode:
\begin{align}\label{Calc}
	\hat{\varrho}_{1;\rm out}&={\rm Tr}_{2}\hat{\varrho}_{\rm out}\nonumber\\
	&=\frac{1}{\eta_1^2\eta_2^2}\int \frac{d^2\alpha d^2\beta}{\pi} \chi^{(1)}_{\rm in}(-\alpha,1)\chi^{(2)}_{\rm in}(-\beta,1) \hat{D}_1\Big([\frac{t}{\eta_1}+\frac{r}{\eta_2}]\frac{\alpha}{\sqrt{2}}+[\frac{t}{\eta_1}-\frac{r}{\eta_2}]\frac{\beta}{\sqrt{2}},-1\Big)\delta^{(2)}\Big([\frac{r}{\eta_1}+\frac{t}{\eta_2}]\frac{\beta}{\sqrt{2}}-[\frac{t}{\eta_2}-\frac{r}{\eta_1}]\frac{\alpha}{\sqrt{2}}\Big)\nonumber\\
	&=\frac{2}{(t\eta_1+r\eta_2)^2}\int \frac{d^2\alpha d^2\gamma}{\pi} \chi^{(1)}_{\rm in}(-\alpha,1)\chi^{(2)}_{\rm in}(-\frac{\sqrt{2}\gamma}{[\frac{r}{\eta_1}+\frac{t}{\eta_2}]},1) \hat{D}_1\Big([\frac{t}{\eta_1}+\frac{r}{\eta_2}]\frac{\alpha}{\sqrt{2}}+\frac{t\eta_2-r\eta_1}{t\eta_1+r\eta_2}\gamma,-1\Big)\delta^{(2)}\Big(\gamma-[\frac{t}{\eta_2}-\frac{r}{\eta_1}]\frac{\alpha}{\sqrt{2}}\Big)\nonumber\\
	&=\frac{2}{(t\eta_1+r\eta_2)^2}\int \frac{d^2\alpha}{\pi} \chi^{(1)}_{\rm in}(-\alpha,1)\chi^{(2)}_{\rm in}(-\frac{t\eta_1-r\eta_2}{t\eta_1+r\eta_2}\alpha,1) \hat{D}_1\Big(\frac{\sqrt{2}}{t\eta_1+r\eta_2}\alpha,-1\Big).
\end{align}
In the second line of Eq.~\eqref{Calc}, we have used $\gamma:=[\frac{r}{\eta_1}+\frac{t}{\eta_2}]\frac{\beta}{\sqrt{2}}$. We want to cancel the effect of the second input mode, so we choose
\begin{align}
	&t\eta_1-r\eta_2=0\Rightarrow t=\frac{\eta_2}{\sqrt{\eta_1^2+\eta_2^2}}\quad\text{and}\quad r=\frac{\eta_1}{\sqrt{\eta_1^2+\eta_2^2}},\quad (t^2+r^2=1),\\
	&t\eta_1+r\eta_2=\frac{2\eta_1\eta_2}{\sqrt{\eta_1^2+\eta_2^2}},
\end{align}
and thus,
\begin{align}\label{1Out}
	\hat{\varrho}_{1;\rm out}&=\frac{\eta_1^2+\eta_2^2}{2\eta_1^2\eta_2^2}\int \frac{d^2\alpha}{\pi} \chi^{(1)}_{\rm in}(-\alpha,1) \hat{D}_1\Big(\frac{\sqrt{\eta_1^2+\eta_2^2}}{\sqrt{2}\eta_1\eta_2}\alpha,-1\Big).
\end{align}
Multiplying both sides of Eq.~\eqref{1Out} by $\hat{D}_1(-\beta,1)$ gives
\begin{align}
	&\chi^{(1)}_{\rm out}(-\beta,1)=\chi^{(1)}_{\rm in}(-\frac{\sqrt{2}\eta_1\eta_2}{\sqrt{\eta^2_1+\eta^2_2}}\beta,1),\\
	\Rightarrow &\chi^{(1)}_{\rm out}(\beta)=\chi^{(1)}_{\rm in}(\frac{\sqrt{2}\eta_1\eta_2}{\sqrt{\eta^2_1+\eta^2_2}}\beta)\exp\{-\frac{(\eta_1-\eta_2)^2}{2(\eta_1^2+\eta_2^2)}|\beta|^2\}.
\end{align}
This represents the fact that, in the first output, we get a Wigner function which is smoothed by the parameter
\begin{equation}
	s=-\frac{(\eta_1-\eta_2)^2}{\eta_1^2+\eta_2^2}.
\end{equation}
If the loss is symmetric in both arms ($\eta_1=\eta_2=\eta$), there will be no smoothing effect and we get
\begin{equation}
	\chi^{(1)}_{\rm out}(\beta)=\chi^{(1)}_{\rm in}(\eta\beta).
\end{equation}
Equivalently, in Fourier space
\begin{equation}
	\mathscr{W}^{(1)}_{\rm det}(\beta)=\frac{1}{\eta^2}\mathscr{W}^{(1)}_{\rm in}(\frac{\beta}{\eta}).
\end{equation}

\section{The effect of loss on the single-photon entangled state}\label{SPApp}

The result of Sec.~\ref{SecIV} is an extension of the following simple example: a single photon splitting on a balanced BS.
In this case, where the other input to the BS is just the vacuum state, the state after the losses in each channel can be calculated easily.
The output of the BS is simply the Bell state $|\phi^+\rangle=(|10\rangle_{1,2}+|01\rangle_{1,2})/\sqrt{2}$.
Each mode suffers from a loss of $\eta_1=\eta_2=\eta$ ($0\leq\eta\leq 1$) which can be modeled by two BSs of transitivity $\eta$ and two ancillary modes.
The overall state will thus be given by
\begin{equation}
|\psi(\eta)\rangle=\frac{1}{\sqrt{2}}(|\phi^+(\eta)\rangle_{13}|00\rangle_{24} + |00\rangle_{13}|\phi^+(\eta)\rangle_{24}),
\end{equation}
where the modes $3$ and $4$ are the ancillae and $|\phi^+(\eta)\rangle_{ij}=\eta|10\rangle_{ij}+\sqrt{1-\eta^2}|01\rangle_{ij}$.
Now, tracing out the ancillae modes gives
\begin{equation}
\hat{\varrho}_{12}(\eta) = (1-\eta^2) |00\rangle\langle 00| + \frac{\eta^2}{2}|10\rangle\langle 10| +
\frac{\eta^2}{2}|01\rangle\langle 01| + \frac{\eta^2}{2}|01\rangle\langle 10| + \frac{\eta^2}{2}|10\rangle\langle 01|.
\end{equation}
Now, we can apply the partial transposition (PT) criterion in which the partially transposed state can be represented in matrix form as
\begin{equation}
\hat{\varrho}^{\Gamma}_{12}(\eta)=
\begin{pmatrix}
1-\eta^2 & 0 & 0 & \frac{\eta^2}{2} \\ 
 0 & \frac{\eta^2}{2} & 0 & 0 \\ 
 0 & 0 & \frac{\eta^2}{2} & 0 \\ 
\frac{\eta^2}{2}& 0 & 0 & 0
\end{pmatrix},
\end{equation}
where the partial transposition $\Gamma$ is taken with respect to mode two.
Note that for two-qubit states ($2\times 2$ quantum systems) PT criterion is both necessary and sufficient to verify the entanglement.
The eigenvalues of $\hat{\varrho}^{\Gamma}_{12}(\eta)$ are given by
\begin{equation}
\begin{split}
&\lambda_{1,2}=\frac{\eta^2}{2} \quad\text{(two-fold degenerate)},\\
&\lambda_3=\frac{1}{2}(1-\eta^2+\sqrt{2\eta^4-2\eta^2+1}),\\
&\lambda_4=\frac{1}{2}(1-\eta^2-\sqrt{2\eta^4-2\eta^2+1}).
\end{split}
\end{equation}
The eigenvalue $\lambda_4$ takes on negative values for any $\eta>0$, since $2\eta^4-2\eta^2+1\geq (1-\eta^2)^2$.
This means that for any value of loss below $100\%$, the lossy output state $\hat{\varrho}_{12}(\eta)$ is NPT-entangled and thus distillable.

\end{widetext}

\end{document}